\numberwithin{equation}{section}
\numberwithin{figure}{section}
\newcommand{\lyxaddress}[1]{
\par {\raggedright #1
\vspace{1.4em}
\noindent\par}
}
\theoremstyle{plain}
\newtheorem{thm}{\protect\theoremname}
  \theoremstyle{definition}
  \newtheorem*{example*}{\protect\examplename}
  \theoremstyle{remark}
  \newtheorem*{rem*}{\protect\remarkname}
  \theoremstyle{remark}
  \newtheorem{rem}[thm]{\protect\remarkname}
  \providecommand{\examplename}{Example}
  \providecommand{\remarkname}{Remark}
\providecommand{\theoremname}{Theorem}
\begin{document}

\title{An Optimal Linear Coding for Index Coding Problem }

\author{Pouya Pezeshkpour }

\date{Department of Electrical Engineering, Sharif University of Technology}

\maketitle

\lyxaddress{\begin{center}
Email: pezeshkpour\_pouya@ee.sharif.edu
\par\end{center}}
\begin{abstract}
An optimal linear coding solution for index coding problem is established.
Instead of network coding approach by focus on graph theoric and algebraic
methods a linear coding program for solving both unicast and groupcast
index coding problem is presented. The coding is proved to be the
optimal solution from the linear perspective and can be easily utilize
for any number of messages. The importance of this work is lying mostly
on the usage of the presented coding in the groupcast index coding
problem.
\end{abstract}

\section{INTRODUCTION}

We consider the problem of Index Coding when the sender wish to communicate
$n$ messages $w_{i}$, $i\epsilon\{1,...,n\}$ to $m$ receivers
$r_{j}$, $j\epsilon\{1,...,m\}$with desire to recover a subset of
messages $R_{j}$, with prior knowledge of $S_{j}$, as a side information
which are subset of messages.

If we consider the encoder as a function $f(w_{1},...,w_{n})$ which
role is to satisfy the desire of ever single one of the receivers,
the index coding problem's goal is to minimize the size of $f$'s
range.

To do so, we characterize index coding problem as a bipartite graph
where the first set of nodes represent indices of the messages and
the other one represent indices of the receivers. The edges in this
graph represent the knowledge of the receiver node from the message
node as a side information. By using the same definition as {[}4{]},
we represent the desire of each receiver as $r_{j}(R_{j}\mid S_{j})$.
As an example, consider an index coding problem with 3-message index
and 2 receiver with $S_{1}=\{2,3\}$, $S_{2}=\{1\}$, $R_{1}=\{1\}$
and $R_{2}=\{2,3\}$. As a result this problem is represented as:

\begin{equation}
r_{1}(1\mid2,3),r_{2}(2,3\mid1),\label{eq:1}
\end{equation}

Note that according to graphical characterization presented earlier,
this example can be illustrated as Fig. 1.

The remainder of this paper is organized as follows. In Section 2,
the coding solution for unicast cases and an algorithm to simplify
its performance is provided. The generalization of this coding program
for groupcast cases is provided in Section 3. Section 4 concludes
the paper.

\begin{figure}
\includegraphics[bb=-100bp 0bp 207bp 120bp]{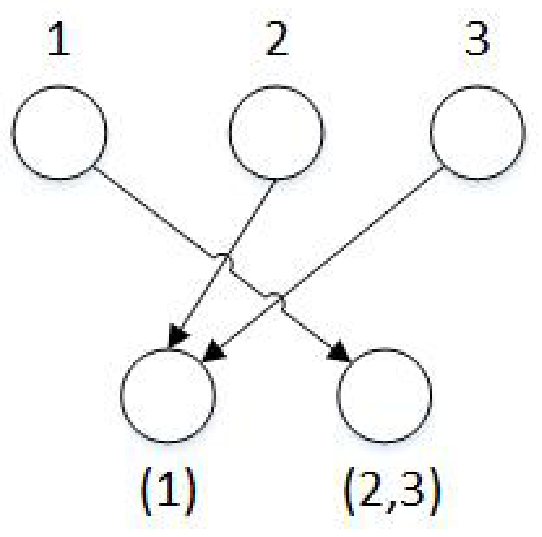}

Fig. 1. The graph representation of 3-message index coding problem.
\end{figure}

\section{THE CODING PROGRAM FOR UNICAST CASES}

Lets consider a general unicast index coding problem as following:

\begin{equation}
r_{1}(1\mid S_{1}),r_{2}(2\mid S_{2}),...,r_{n}(n\mid S_{n}),\label{eq:2}
\end{equation}

By considering the graph characterization for this problem, we call
two receiver nodes a cross neighbor if the desire message in the first
one be part of the other one's side information and vice versa. As
a result, a cross neighbor set is a set of receiver nodes which any
two of them are a cross neighbor. It is obvious that we can satisfy
the desire of every nodes in a cross neighbor set by sending the summation
of all the desired messages in the set. 

The key to the optimal coding solution is to find the minimum number
of cross neighbor sets in the graph. As a result, our coding program
is to partition the graph to the minimum number of cross neighbor
sets and then send the summation of all the desired nodes in the each
set to satisfy their desire and separately send every desired messages
which are not in any cross neighbor set.
\begin{thm}
The presented coding program is an optimal solution for the index
coding problem from linear perspective.\end{thm}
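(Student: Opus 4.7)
The plan is to split the argument into two parts: achievability (the scheme is a valid linear code and uses the claimed number of transmissions) and a converse (no linear scheme does strictly better).

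For achievability, I would verify that the scheme described in the preceding paragraph is indeed a valid linear index code. Fix a partition of the receiver nodes into cross-neighbor sets $C_{1},\dots,C_{k}$ (with messages not lying in any cross-neighbor set absorbed as singletons). For each $C_{\ell}$, the sender transmits $\sum_{i\in C_{\ell}}w_{i}$. By the definition of a cross-neighbor set, every receiver $r_{i}$ with $i\in C_{\ell}$ has all the other $w_{j}$ ($j\in C_{\ell}$, $j\neq i$) in its side information $S_{i}$, so it can subtract them off and recover $w_{i}$. Hence the scheme is valid, and the total number of transmissions equals the number of parts in a minimum cross-neighbor partition.

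For the converse, I would argue contrapositively. Represent an arbitrary scalar linear index code as a matrix $A$ with rows $a_{1},\dots,a_{t}$, where the sender transmits $Aw$ with $w=(w_{1},\dots,w_{n})^{\top}$. Decodability at $r_{i}$ amounts to the existence of a linear combination of the rows of $A$ whose restriction to the complement of $S_{i}$ is the standard basis vector $e_{i}$. The plan is to show, from this decodability condition, that any transmission $a_{\ell}$ can be taken (after an invertible change of basis on the rows of $A$) to ``serve'' only a set $D_{\ell}$ of receivers whose desired indices must pairwise satisfy the cross-neighbor property, so that $D_{\ell}$ is a cross-neighbor set. A counting argument then forces $t$ to be at least the minimum number of cross-neighbor sets needed to cover all receivers, completing the converse.

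The main obstacle will be the converse. A priori, a single linear combination may involve message indices that are not contained in every side-information pattern of the receivers it helps, provided the ``extra'' messages get cancelled in combination with \emph{other} transmissions. Ruling out this kind of cross-cancellation, and showing that without loss of generality $A$ can be put into a block form whose row blocks are in one-to-one correspondence with cross-neighbor sets, is the crux. I would attempt this either by a direct Gaussian-elimination style normalization of $A$ against the side-information subspaces $\mathrm{span}\{e_{j}:j\in S_{i}\}$, or by relating the minimum number of transmissions to the min-rank of the side-information graph and then equating this min-rank with the cross-neighbor partition number for the instances considered. The latter equality is the delicate step and where the ``linear perspective'' qualifier in the statement will have to do real work.
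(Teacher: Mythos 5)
Your achievability half is correct and is essentially the first half of the paper's own argument: within a cross-neighbor set every receiver holds all the other summands as side information, so the sum is decodable, and hence the scheme uses exactly as many transmissions as parts in the partition. The genuine gap is the converse, and you have located it precisely: the ``delicate step'' of equating the minimum length of an arbitrary linear code with the minimum number of cross-neighbor sets cannot be carried out, because the two quantities differ in general. The cross-cancellation you worry about is not a removable technicality. Consider the unicast instance $r_{1}(1\mid 2),\; r_{2}(2\mid 3),\; r_{3}(3\mid 1)$. No two receivers are cross neighbors (the side information is never reciprocated), so the presented program must send $w_{1},w_{2},w_{3}$ separately, i.e.\ $t=3$. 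But the linear code transmitting $w_{1}+w_{2}$ and $w_{2}+w_{3}$ over $\mathrm{GF}(2)$ satisfies all three receivers with $t=2$: receiver $r_{3}$ adds the two transmissions to obtain $w_{1}+w_{3}$ and strips off its side information $w_{1}$. Here one transmission helps a receiver only in combination with the other, so no invertible row operation on $A$ can produce the block form your plan requires, and the minimum cross-neighbor partition number strictly exceeds the min-rank. In min-rank language, the cross-neighbor partition is a clique cover of the \emph{bidirectional} part of the side-information digraph and therefore only upper-bounds the optimal linear length; the bound is loose whenever one-directional side information can be exploited, as above (note the paper's own example in equation (2.3) has such asymmetric side information).

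For comparison, the paper's proof does not attempt your converse at all: after removing the cross-neighbor sets it simply asserts that ``by a single transmission we can at most satisfy one receiver node.'' That assertion silently restricts attention to codes in which each receiver decodes from one transmission in isolation --- exactly the class your normalization would have to reduce a general $A$ to --- and the three-cycle example shows this reduction is impossible. So the paper proves, at best, optimality of the scheme among clique-cover-type schemes, not among all linear codes, and the theorem as stated is false for general unicast side information. Your proposal is the right \emph{shape} for a linear-optimality proof and is more honest than the paper about where the burden lies, but the burden cannot be discharged without either adding hypotheses (e.g.\ restricting the competing codes, or assuming structure on the side-information digraph under which clique cover equals min-rank) or changing the combinatorial quantity from cross-neighbor partitions to min-rank itself.
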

\begin{proof}
As it is mentioned before, we can satisfy the desire of every nodes
in a cross neighbor set by sending one message through the channel.
As a result, to get to the optimal program we must send this message
for every single one of the cross neighbor sets. Since, we considered
the minimum number of existing cross neighbor sets, it is clear that
this approach will satisfy the maximum number of receivers with the
minimum number of outputs. Now that we satisfy the cross neighbor
sets we can omit them from our original graph without effecting the
final result. The remaining graph has no cross neighbor. As a result,
by a single transmission we can at most satisfy one receiver node,
so it is optimal to separately send all the desired messages in the
remaining nodes through the channel. 
\end{proof}

\subsection{AN ALGORITHM FOR UTILIZING PRESENTED CODING PROGRAM}

The only Achilles heel in this program, is the fact that partitioning
the graph to the minimum number of cross neighbor sets is not a simple
task. To do so, we first need to specify all the cross neighbors in
our graph. Then, by defining a new graph which its nodes are all the
cross neighbor nodes and its edges are between every two nodes which
were cross neighbor, we can transform our cross neighbor sets to complete
subgraphs. Therefore, instead of finding the minimum number of cross
neighbor sets we must find the minimum number of complete subgraphs
which previously considered thoroughly in graph theory.
\begin{example*}
Lets consider the following example:

\begin{equation}
r_{1}(1\mid2,3,4),r_{2}(2\mid5),r_{3}(3\mid1,4),r_{4}(4\mid1,3),r_{5}(5\mid2,6),r_{6}(6\mid4),\label{eq:}
\end{equation}

This instance can be illustrate as Fig. 2. To perform the presented
coding program we must partitioning the graph to the minimum number
of cross neighbor sets, and since this case is not very complicated
we can obtain this goal without using provided algorithm. The partitioning
of the graph to the cross neighbor sets is provided in Fig. 3. As
a result, we can solve this index coding problem by sending $w_{1}+w_{3}+w_{4},w_{2}+w_{5},w_{6}$
through the channel. 

\begin{figure}
\includegraphics[bb=0bp 0bp 442bp 150bp]{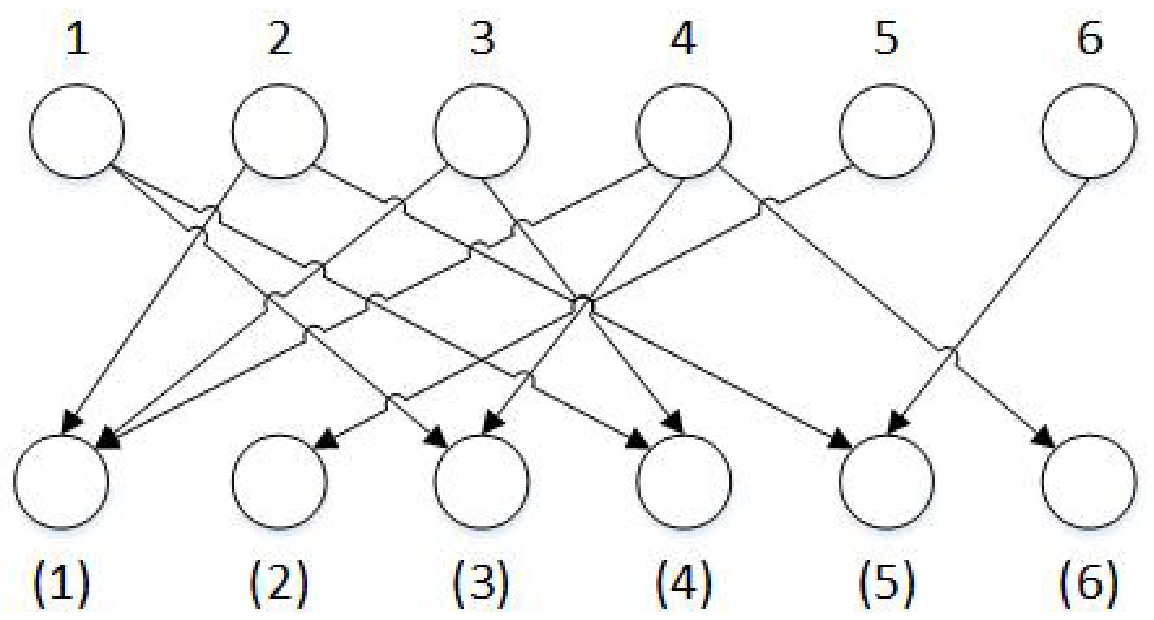}

Fig. 2. The graphical illustration of the example.

\end{figure}

\begin{figure}
\includegraphics[bb=0bp 0bp 442bp 150bp]{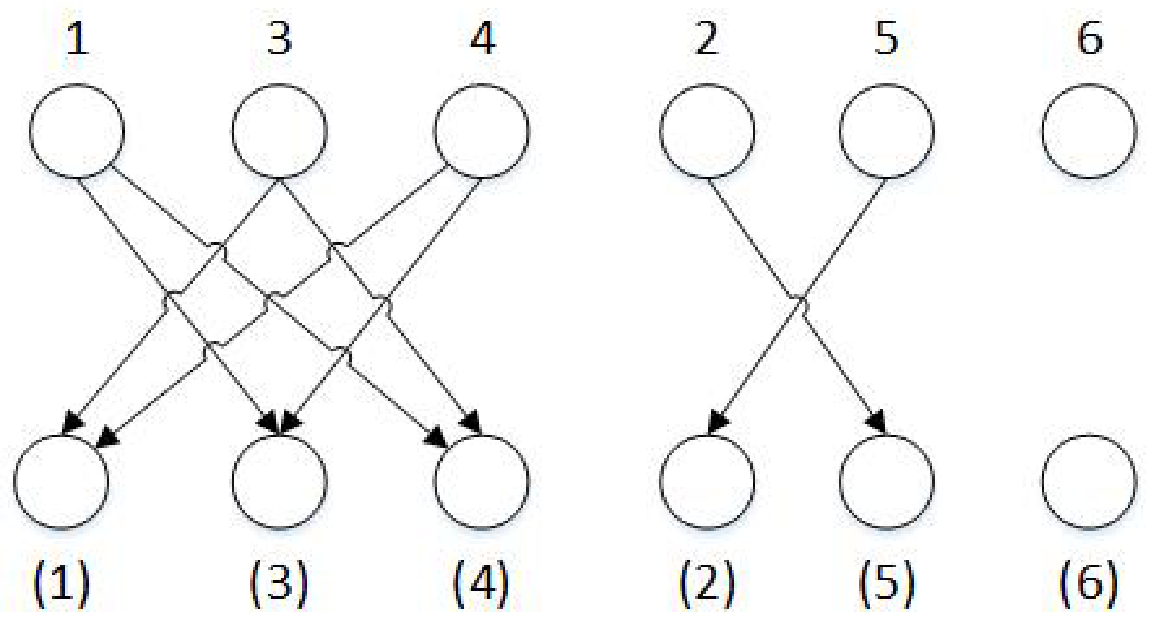}

Fig. 3. The partitioning of the graphical characterization of the
example to the cross neighbor sets.
\end{figure}

\end{example*}

\section{THE CODING PROGRAM FOR GROUPCAST CASES}

Lets consider a general groupcast index coding problem as following:

\begin{equation}
r_{1}(R_{1}\mid S_{1}),r_{2}(R_{2}\mid S_{2}),...,r_{n}(R_{n}\mid S_{n}),\label{eq:2-1}
\end{equation}

To perform the coding program presented in the last section, we first
need to characterize the problem to the same format as considered
in the previous section. To do so, we need to treat every single desired
messages as a separate receiver node, it means that we must break
down the $r_{i}$ to $r_{i1},...,r_{i\mid R_{i}\mid}$which have the
same side information as $r_{i}$. As a result, the total number of
receiver nodes would be equal to $\sum_{i=1}^{i=n}\mid R_{i}\mid$.
The reason behind this characterization, is the fact that to get to
the optimal solution we need to consider all the desired messages
in every receivers in the same time.

Now that we transform the problem to a unicast case, By doing the
same method as the previous section, and probably use the presented
algorithm to finalize the coding, since the number of nodes will grow
much larger, we can easily get to the optimal solution.
\begin{rem*}
It is obvious that we cannot satisfy two or more desire of a receiver
node with a single transmission. As a result, breaking down the receiver
nodes will not affect the optimal solution.\end{rem*}
\begin{rem}
Note that although in this case we will have different nodes with
the same desire, since we build our coding program on finding the
minimum number of cross neighbor sets, considering these same desires
as separate receiver nodes will not affect the fact that our coding
program is optimal.
\end{rem}

\section{CONCLUSION}

In this work by considering the linear concept of index coding problem,
an optimal coding program for unicast and groupcast cases provided.
The most important privilege of this coding is the fact that it can
be used in the general case of index coding problem. Although performing
this method may be difficult in cases with lots of messages, but presented
algorithm will simplify the usage of this coding solution.

\end{document}